\theoremstyle{definition}
\newtheorem{theorem}{Theorem}
\newtheorem{example}{Example}
\def\tr{\mathop{\rm Tr}}
\begin{document}
\title{A Note on Schmidt-number witnesses based on symmetric measurements}
\author{Xiao-Qian Mu}
\email{2230502152@cnu.edu.cn}
\affiliation{School of Mathematical Sciences, Capital Normal University, Beijing 100048, China}
\author{Hao-Fan Wang}
\email{2230502117@cnu.edu.cn}
\affiliation{School of Mathematical Sciences, Capital Normal University, Beijing 100048, China}
\author{Shao-Ming Fei}
\email{feishm@cnu.edu.cn}
\affiliation{School of Mathematical Sciences, Capital Normal University, Beijing 100048, China}
\begin{abstract}
The Schmidt number is an important kind of characterization of quantum entanglement. Quantum states with higher Schmidt numbers demonstrate significant advantages in various quantum information processing tasks. By deriving a class of $k$-positive linear maps based on symmetric measurements, we present new Schmidt-number witnesses of class $(k+1)$. By detailed example, we show that our Schmidt number witnesses identify better the Schmidt number of quantum states in high-dimensional systems. \textcolor{blue}{Furthermore, we note that the Fedorov ratio, which coincides with the Schmidt number for pure Gaussian states and provides a close approximation in non-Gaussian cases such as spontaneous parametric down-conversion (SPDC), serves as an experimentally accessible tool for validating the proposed $(k + 1)$-class Schmidt-number witnesses. }
\end{abstract}

\keywords{Schmidt-number witnesses, Symmetric measurements, Quantum entanglement}

\maketitle

\section{Introduction}

\indent	Entanglement is one of the most fundamental features in quantum mechanics \cite{R. Horodecki,M. B. Plenio}. As a key resource in quantum information processing, it enables quantum tasks such as quantum cryptography \cite{A. K. Ekert}, quantum teleportation \cite{C. H. Bennett} and superdense coding \cite{Bennett}, and plays vital roles in quantum computing and communication.

The Schmidt number is a fundamental quantity that reflects how many orthogonal quantum states contribute to the entanglement of a two-sided density matrix \cite{Terhal}. A higher Schmidt number implies a more complex and higher dimensional entanglement structure, thus providing greater potential for quantum information processing. Quantum states with high Schmidt number show significant advantages in such as quantum channel discrimination, communication, control, general-purpose computing and quantum key distribution from the view of key rate and fault tolerance \cite{Huber}.

Various methods have been developed to detect and characterize the Schmidt number of a given state. In 2000, Terhal et al.\cite{Terhal} proposed a renowned Schmidt number criterion by examining the fidelity between the quantum state and maximally entangled states. It is proved that the Schmidt number of a quantum state $\rho$ is greater than $k$ if and only if there exists a certain k-positive linear map $\Lambda$ such that $({\rm id}\otimes\Lambda)(\rho)$ is not positive semidefinite. In Refs.\cite{Hulpke,Johnston} the authors studied the Schmidt numbers based on the well-known positive partial transpose (PPT) entanglement criterion \cite{PPT1,PPT2} and the computable cross-norm (CCNR) or realignment criterion \cite{CCNR1,CCNR2}. The Schmidt number criteria based on Bloch decomposition \cite{Klockl}, covariance matrix \cite{Liu1,Liu2} and correlation trace norm \cite{Tavakoli,ZWang} have been also derived.

Another approach to detect the Schmidt numbers is to use the Schmidt number witnesses. In Ref.\cite{Sanpera} the authors introduced a canonical form of Schmidt number witnesses, providing a structured framework for the implementation and optimization of the Schmidt number witnesses. In Ref.\cite{Wyderka} the authors developed an iterative algorithm that finds Schmidt-number witnesses tailored to the measurements available in specific experimental setups. The author in Ref.\cite{Shi} provided a class of Schmidt number witnesses based on symmetric informationally complete measure (SIC POVM) and mutually unbiased bases (MUBs), which generalizes the results given in Ref.\cite{Chruściński}.
In fact, the mutually unbiased measurements (MUMs) \cite{Kalev1} and general symmetric informationally complete measure (GSIC POVM) \cite{Kalev2} are the natural extensions of MUBs and SIC POVM, respectively. Recently, a new kind of measurement called symmetric measurement or $(N,M)$-POVM, has been proposed in Ref.\cite{Siudzinska}, which can be thought of as a common generalization of GSIC POVM and MUMs. Recently, in Ref.\cite{HFWang} the authors derived a Schmidt number criterion via symmetric measurements.

Nevertheless, despite the advances above, reliable characterization of Schmidt numbers in high-dimensional systems remains a challenging and ongoing problem in the theory of quantum entanglement. In this paper, we present a class of $k$-positive maps by using the informationally complete $(N,M)$-POVM, as well as the corresponding Schmidt-number witness which include the entanglement witnesses given in Ref.\cite{Chruściński} as particular cases. \textcolor{blue}{Although our work is purely theoretical, it can be combined with an existing experimentally measurable entanglement quantifier, the Fedorov ratio $R_q$, which can be used together with our \((k+1)\)-class Schmidt-number witnesses. This combination provides a promising direction for future research, offering both strict lower bounds on the Schmidt number and direct experimental validation for high-dimensional entanglement studies\cite{Fedorov} \cite{G. Brida}.}

\section{The $k$-positive map and Schmidt-number witness based on $(N,M)$-POVM}
A pure bipartite state $\ket{\psi}$ in Hilbert space $\mathcal{H}_{A}\otimes\mathcal{H}_{B}$ has a Schmidt decomposition $ \ket{\psi}=\sum\limits_{i=1}^{r} \lambda_i \ket{u_{i}}\otimes\ket{v_{i}}$, where $\lambda_i>0$, $\sum\limits_{i=1}^{r}\lambda_i^2=1$, $\{\ket{u_i}\}$ and $\{\ket{v_i}\}$ are the orthonormal bases in $\mathcal{H}_{A}$ and $\mathcal{H}_{B}$, respectively, and $r\equiv{\rm SR}(\ket{\psi})$ is the Schmidt rank of $\ket{\psi}$. The Schmidt number of a general bipartite state $\rho$ in $\mathcal{H}_{A}\otimes\mathcal{H}_{B}$ is defined as \cite{Terhal},
\begin{equation*}
	{\rm SN}(\rho) = \min\limits_{\{p_{i},\ket{\psi_{i}}\}}\max\limits_{i}{\rm SR}(\ket{\psi_{i}}),
\end{equation*}
where the minimization takes over all pure state decompositions of $\rho=\sum_i p_i \ket{\psi_i}\bra{\psi_i}$.

Denote $S_{k}=\{\rho\in\mathcal{D}(\mathcal{H}_{A}\otimes\mathcal{H}_{B})|{\rm SN}(\rho)\leq k\}$, where $\mathcal{D}(\mathcal{H}_{A}\otimes\mathcal{H}_{B})$ is the set of all density operators on $\mathcal{H}_{A}\otimes\mathcal{H}_{B}$. It is easy to see that $S_1\subseteq S_{2}\subset\cdots\subseteq S_{k}\subseteq\cdots$. In particular, $S_{1}$ is the set of separable states, since a bipartite state is separable if and only if its Schmidt number is one.

A Hermitian operator $W_{k}$ acting on $\mathcal{H}_{A}\otimes\mathcal{H}_{B}$ is called a Schmidt-number witness of class $(k+1)$ if $\tr(W_k\rho)\geq 0$ for all $\rho\in S_{k}$ and $\tr(W_k\sigma)<0$ for at least one $\sigma\in\mathcal{D}(\mathcal{H}_{A}\otimes\mathcal{H}_{B})$.
A linear Hermitticity-preserving map $\Lambda$ is said to be $k$-positive if and only if
$(I\otimes\Lambda)(\ket{\psi_k}\bra{\psi_k})\geq 0$, where $\ket{\psi_k}=(U\otimes V)\sqrt{\frac{1}{k}}\sum\limits_{i=0}^{k-1}\ket{ii}$ is any maximally entangled state of Schmidt number $k$, with $U$ and $V$ being arbitrary unitary operators of $\mathcal{H}_A$ and $\mathcal{H}_B$, respectively. If $\Lambda$ is $k$-positive, then $\Lambda^{\dagger}$ defined by $\tr(\rho_1^{\dagger}\cdot\Lambda(\rho_2))=\tr(\Lambda^{\dagger}(\rho_1^{\dagger})\cdot \rho_2)$ for all density operators $\rho_1$ and $\rho_2$, is also $k$-positive, where $\dagger$ stands for complex conjugation and transpose.

An $(N,M)$-POVM \cite{Siudzinska} is a collection of $N$ $d$-dimensional POVMs $\{E_{\alpha,k}|k=1,2\cdots,M\}$ ($\alpha=1,2,\cdots,N$) satisfying the following conditions,
\begin{flalign*}
	\tr(E_{\alpha,k}) &= \dfrac{d}{M}, \\
	\tr(E_{\alpha,k}^{2}) &= x,\\
	\tr(E_{\alpha,k}E_{\alpha,l}) &= \dfrac{d-Mx}{M(M-1)},~~ l\neq k\\
	\tr(E_{\alpha,k}E_{\beta,l}) &= \dfrac{d}{M^{2}},~~ \beta\neq\alpha
\end{flalign*}
where $\dfrac{d}{M^{2}}<x\leq\min\left\{\dfrac{d^{2}}{M^{2}},\dfrac{d}{M}\right\}$. Especially, an $(N,M)$-POVM with $N(M-1)=d^{2}-1$ is called an informationally complete $(N,M)$-POVM. For any finite dimension $d$ ($d>2$), there exist at least four distinct classes of informationally complete $(N,M)$-POVM: (1) $N=1$ and $M=d^{2}$ (GSIC POVM), (2) $N=d+1$ and $M=d$ (MUMs), (3) $N=d^{2}-1$ and $M=2$, (4) $N=d-1$ and $M=d+2$.\par

From orthonormal Hermitian operator basis $\{G_{0}=I_{d}/\sqrt{d},\,G_{\alpha,k}|\alpha=1,\cdots,N;\,k=1,\cdots,M-1\}$ with ${\rm tr}(G_{\alpha,k})=0$, an informationally complete $(N,M)$-POVM is given by
\begin{equation*}
	E_{\alpha,k}=\dfrac{1}{M}I_{d}+tH_{\alpha,k},
\end{equation*}
where
\begin{equation*}
	H_{\alpha,k}=\begin{cases}
		G_{\alpha}-\sqrt{M}(\sqrt{M}+1)G_{\alpha,k},~ k=1,\cdots,M-1\\
		(\sqrt{M}+1)G_{\alpha}, k=M
	\end{cases}
\end{equation*}
with $G_{\alpha}=\sum\limits_{k=1}^{M-1}G_{\alpha,k}$, the parameter $t$ should be chosen such that $E_{\alpha,k}\geq 0$, namely,
\begin{equation*}
	-\dfrac{1}{M}\dfrac{1}{\lambda_{\max}}\leq t \leq \dfrac{1}{M}\dfrac{1}{|\lambda_{\min}|},
\end{equation*}
with $\lambda_{\max}$ and $\lambda_{\min}$ being the minimal and maximal eigenvalue among all the eigenvalues of $H_{\alpha,k}$, respectively.
The parameters $t$ is related to $x$ via $x=\dfrac{d}{M^{2}}+t^{2}(M-1)(\sqrt{M}+1)^{2}$.

Let $\{E_{\alpha,k}|\alpha=1,\cdots,N;\,k=1,\cdots,M\}$ be an informationally complete $(N, M)$-POVM on $d$ dimensional Hilbert space $\mathcal{H}$ with free parameter $x$, and $\mathcal{O}^{(\alpha)}=(O_{kl}^{(\alpha)})$ ($\alpha=1,2\cdots,N$) be a family of $M\times M$ orthogonal rotation matrices which preserve the vector $\mathbf{n}_{*}=\frac{(1,1,\cdots,1)^{\mathsf{T}}}{\sqrt{d}}$, where $\mathsf{T}$ denotes transpose. We have the following class of $k$-positive maps $\Lambda$.

\begin{theorem}
The following linear map
\begin{equation}\label{1}
\Lambda(X)=  \frac{\mathbb{I}_d}{d}{\rm Tr}(X)-h_{x}\sum\limits_{\alpha=1}^{N}\sum\limits_{g,l=1}^{M}\mathcal{O}_{gl}^{(\alpha)}{\rm Tr}[(X-\frac{\mathbb{I}_d}{d}{\rm Tr}(X))E_{\alpha,l}]E_{\alpha,g}
\end{equation}
is $k$-positive for any density operator $X$, where $h_{x}=\frac{1}{kd-1}\sqrt{\frac{M(M-1)}{x(M^2x-d)}}$ and $\mathbb{I}_d$ is the $d\times d$ identity operator.
\end{theorem}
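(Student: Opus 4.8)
The plan is to check the defining inequality $({\rm id}\otimes\Lambda)(\ket{\psi_k}\bra{\psi_k})\ge0$ head-on. First I would reduce to $U=V=\mathbb{I}_d$: conjugating the first leg by $U$ preserves positivity, and pre-/post-composing $\Lambda$ with $X\mapsto VXV^{\dagger}$ and $X\mapsto V^{\dagger}XV$ produces a map of exactly the same form built from the rotated family $\{V^{\dagger}E_{\alpha,k}V\}$, which is again an informationally complete $(N,M)$-POVM with the same $x$ (every defining trace condition is unitarily invariant, and the matrices $\mathcal O^{(\alpha)}$ are untouched). Hence it suffices to prove $({\rm id}\otimes\Lambda)(\ket{\psi_k^{0}}\bra{\psi_k^{0}})\ge0$ with $\ket{\psi_k^{0}}=\tfrac1{\sqrt k}\sum_{i=0}^{k-1}\ket{ii}$ for an arbitrary such POVM.

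Next I would assemble the algebra. Writing $E_{\alpha,k}=\tfrac1M\mathbb{I}_d+tH_{\alpha,k}$, the explicit form of $H_{\alpha,k}$ gives $\sum_k H_{\alpha,k}=0$ and the Gram law $\tr(H_{\alpha,k}H_{\alpha,l})=(\sqrt M+1)^{2}(M\delta_{kl}-1)$; combined with the operator-basis completeness relation $\sum_a G_a\otimes G_a=\mathrm{SWAP}$ this yields the frame identity $\sum_{\alpha,k}H_{\alpha,k}\otimes H_{\alpha,k}=M(\sqrt M+1)^{2}(\mathrm{SWAP}-\tfrac1d\mathbb{I}_d\otimes\mathbb{I}_d)$, equivalently $\sum_{\alpha,k}H_{\alpha,k}\,Y\,H_{\alpha,k}=M(\sqrt M+1)^{2}(\tr(Y)\mathbb{I}_d-\tfrac1d Y)$ for all $Y$. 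Since each $\mathcal O^{(\alpha)}$ is orthogonal and fixes the all-ones vector, the rotated operators $\widetilde H_{\alpha,g}:=\sum_l\mathcal O^{(\alpha)}_{gl}H_{\alpha,l}$ satisfy $\sum_g\widetilde H_{\alpha,g}=0$ and obey the same quadratic relations, and $\widetilde E_{\alpha,g}:=\sum_l\mathcal O^{(\alpha)}_{gl}E_{\alpha,l}=\tfrac1M\mathbb{I}_d+t\widetilde H_{\alpha,g}$. I would also record $t^{2}(\sqrt M+1)^{2}=\tfrac{M^{2}x-d}{M^{2}(M-1)}$, $N(M-1)=d^{2}-1$, and the constraint $Mx\le d$ coming from $x\le d/M$.

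Then I would compute directly: testing $({\rm id}\otimes\Lambda)(\ket{\psi_k^{0}}\bra{\psi_k^{0}})$ against a general $\ket{\phi}=\sum_{i=0}^{k-1}\ket{i}\ket{\phi_i}$ and using $\sum_{\alpha,g}E_{\alpha,g}=N\mathbb{I}_d$ together with $\sum_g H_{\alpha,g}=\sum_g\widetilde H_{\alpha,g}=0$, the identity-proportional terms and the terms linear in $t$ all cancel, leaving $k\bra{\phi}({\rm id}\otimes\Lambda)(\ket{\psi_k^{0}}\bra{\psi_k^{0}})\ket{\phi}=\tfrac1d\sum_i\|\phi_i\|^{2}-h_xt^{2}S'$, where $S'=\sum_{\alpha,g}\tr\!\big([\widetilde H_{\alpha,g}]_{k}\,\Phi^{\dagger}H_{\alpha,g}\Phi\big)$, $\Phi=\sum_i\ket{\phi_i}\bra{i}$, and $[\,\cdot\,]_{k}$ is the $k\times k$ leading block. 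So everything comes down to the scalar estimate $S'\le\tfrac1{h_xt^{2}d}\sum_i\|\phi_i\|^{2}$.

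The hard part will be proving this last bound with the sharp constant: a crude operator-norm estimate is too lossy (it essentially forces $k=d$ and only yields ordinary positivity). The fix is a two-step Cauchy–Schwarz in the Hilbert–Schmidt inner product — first per index $(\alpha,g)$, then summed over $(\alpha,g)$ — fed by two consequences of the identities above. From $\sum_{\alpha,g}\big([\widetilde H_{\alpha,g}^{2}]_k-[\widetilde H_{\alpha,g}]_k^{2}\big)=\sum_{\alpha,g}P_k\widetilde H_{\alpha,g}(\mathbb{I}_d-P_k)\widetilde H_{\alpha,g}P_k=M(\sqrt M+1)^{2}(d-k)P_k$ one obtains $\sum_{\alpha,g}\|[\widetilde H_{\alpha,g}]_k\|_F^{2}=kM(\sqrt M+1)^{2}\tfrac{kd-1}{d}$; and with $\omega=\Phi\Phi^{\dagger}$ (rank $\le k$, $\tr\omega=s:=\sum_i\|\phi_i\|^{2}$) one obtains $\sum_{\alpha,g}\|\Phi^{\dagger}H_{\alpha,g}\Phi\|_F^{2}=M(\sqrt M+1)^{2}(s^{2}-\tfrac1d\tr\omega^{2})\le M(\sqrt M+1)^{2}s^{2}\tfrac{kd-1}{kd}$, using $\tr\omega^{2}\ge s^{2}/k$. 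Multiplying the two gives $S'\le M(\sqrt M+1)^{2}\tfrac{kd-1}{d}\,s$, and substituting the values of $h_x$ and $t^{2}(\sqrt M+1)^{2}$ the target inequality collapses exactly to $Mx\le d$, which always holds. Therefore $({\rm id}\otimes\Lambda)(\ket{\psi_k^{0}}\bra{\psi_k^{0}})\ge0$ and $\Lambda$ is $k$-positive. The genuinely delicate point is precisely the appearance of $kd-1$ rather than $d^{2}-1$ — traceable to $\mathrm{rank}(\Phi\Phi^{\dagger})\le k$ and to the ``block-of-a-square versus square-of-a-block'' gap — which is exactly what matches the chosen normalization of $h_x$.
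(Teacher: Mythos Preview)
Your argument is correct and follows a genuinely different route from the paper's. The paper never tests $({\rm id}\otimes\Lambda)(\ket{\psi_k}\bra{\psi_k})$ against arbitrary vectors; instead it computes the purity $\tr\!\big[({\rm id}\otimes\Lambda)(\ket{\psi_k}\bra{\psi_k})\big]^{2}$ in closed form, feeding the identity $\sum_{\alpha,k}|\tr(E_{\alpha,k}\sigma)|^{2}=\tfrac{d(M^{2}x-d)\tr(\sigma\sigma^{\dagger})+(d^{3}-M^{2}x)|\tr\sigma|^{2}}{dM(M-1)}$ into the expansion, obtains exactly $\tfrac{1}{kd-1}$, and then appeals to a purity-implies-positivity criterion to conclude. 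You instead establish positivity directly: the frame identity $\sum_{\alpha,g}H_{\alpha,g}YH_{\alpha,g}=M(\sqrt M+1)^{2}\bigl(\tr(Y)\,\mathbb{I}_d-\tfrac1dY\bigr)$, the double Cauchy--Schwarz, and the rank bound $\tr\omega^{2}\ge s^{2}/k$ isolate the constant $kd-1$ without any external lemma, and your final step collapses precisely to the standing hypothesis $Mx\le d$. Your approach is more self-contained and makes the provenance of $kd-1$ explicit (the $k\times k$ block on one side, $\mathrm{rank}\,\Phi\Phi^{\dagger}\le k$ on the other); the paper's approach is shorter once the cited identities are granted and delivers an exact purity value rather than an inequality.
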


\begin{proof}
It has been shown that ${\rm SN}(\rho)\geq k+1$ if and only if there exists a $k$-positive linear map $\Lambda_k$ such that $(I\otimes\Lambda_k)(\rho)\ngeq0$ Ref.\cite{Terhal}. Therefore, we complete the proof by proving that $(I\otimes \Lambda)(\ket{\psi_k}\bra{\psi_k})\geq 0$ for all maximally entangled state $\ket{\psi_k}$ of Schmidt number $k$.
Since ${\rm tr}(\rho^2)\le {1}/{d-1}$ for any state $\rho$ \cite{Bengtsson},
to show the $k$-positivity of $\Lambda$ we only need to show that \cite{Chruściński}
$\tr[(I\otimes \Lambda)(\ket{\psi_k}\bra{\psi_k})]^2\le {1}/{d-1}$ for all maximally entangled state $\ket{\psi_k}$ of Schmidt number $k$.

Noting that $\ket{\psi_k}=(U\otimes V)\sqrt{\frac{1}{k}}\sum\limits_{i=0}^{k-1}\ket{ii}$ with $U$ and $V$ being arbitrary unitary operators of $\mathcal{H}_A$ and $\mathcal{H}_B$, respectively, we have
    \begin{flalign}\label{3}
    &\tr[(I\otimes \Lambda)(\ket{\psi_k}\bra{\psi_k})]^2\nonumber\\
    =&\tr\frac{1}{k^2}[(I\otimes\Lambda)(U\otimes V)(\sum_{i=0}^{k-1}\sum_{j=0}^{k-1}\ket{ii}\bra{jj})(U^{\dagger}\otimes V^{\dagger})]^2\nonumber\\
    =&\tr\frac{1}{k^2}[\sum_{i,j=0}^{k-1}U\ket{i}\bra{j}U^{\dagger}\otimes\Lambda(V\ket{i}\bra{j}V^{\dagger})]^2\nonumber\\
    =&\tr\frac{1}{k^2}[\sum_{i,j,m=0}^{k-1}U\ket{i}\bra{m}U^{\dagger}\otimes \Lambda(V\ket{i}\bra{j}V^{\dagger})\Lambda(V\ket{j}\bra{m}V^{\dagger})]\nonumber\\
    =&\frac{1}{k^2}\tr\left(\sum_{i,j=0}^{k-1}\Lambda(V\ket{i}\bra{j}V^{\dagger})\Lambda(V\ket{j}\bra{i}V^{\dagger})\right).
    \end{flalign}
If $i=j$, we have
    \begin{equation}\label{c1}
    \Lambda(V\ket{i}\bra{i}V^{\dagger})=\frac{\mathbb{I}_d}{d}-h_{x}
    \sum_{\alpha=1}^{N}\sum_{g,l=1}^{M}\mathcal{O}_{gl}^{(\alpha)}
    \tr[(V\ket{i}\bra{i}V^{\dagger}-\frac{\mathbb{I}_d}{d})E_{\alpha,l}]E_{\alpha,g},
    \end{equation}
and
\begin{equation}\label{c2}
    \Lambda(V\ket{i}\bra{i}V^{\dagger})=-h_{x}\sum_{\alpha=1}^{N}
    \sum_{g,l=1}^{M}\mathcal{O}_{gl}^{(\alpha)}\tr[(V\ket{i}\bra{j}V^{\dagger})
    E_{\alpha,l}]E_{\alpha,g}
    \end{equation}
if $i\ne j$.

Substituting (\ref{c1}) and (\ref{c2}) into (\ref{3}), we obtain
    \begin{flalign}\label{6}
    (\ref{3})=&\frac{1}{kd}+\frac{h_{x}^2}{k^2}\sum\limits_{i=0}^{k-1}\sum\limits_{\alpha,\beta=1}^{N}\sum\limits_{g,l,s,t=1}^{M}\mathcal{O}_{gl}^{(\alpha)}\mathcal{O}_{ts}^{(\beta)}\tr[(V\ket{i}\bra{i}V^{\dagger}-\frac{\mathbb{I}_d}{d})E_{\alpha,l}]\tr[(V\ket{i}\bra{i}V^{\dagger}-\frac{\mathbb{I}_d}{d})E_{\beta,s}]\tr(E_{\alpha,g}E_{\beta,t})\nonumber\\
    -&\frac{2h_{x}}{k^2}\sum\limits_{i=0}^{k-1}\tr\sum\limits_{\alpha=1}^{N}\sum\limits_{g,l=1}^{M}\mathcal{O}_{gl}^{(\alpha)}\tr[(V\ket{i}\bra{i}V^{\dagger}-\frac{\mathbb{I}_d}{d})E_{\alpha,l}]\tr(E_{\alpha,g})\nonumber\\
    +&\frac{h_{x}^2}{k^2}\sum\limits_{i\ne j}\sum\limits_{\alpha,\beta=1}^{N}\sum\limits_{g,l,s,t=1}^{M}\mathcal{O}_{gl}^{(\alpha)}\mathcal{O}_{st}^{(\beta)}\tr[V\ket{i}\bra{j}V^{\dagger}E_{\alpha,l}]\tr[V\ket{j}\bra{i}V^{\dagger}E_{\beta,s}]\tr(E_{\alpha,g}E_{\beta,t})\nonumber\\
    =&\frac{1}{kd}+\frac{h_{x}^2}{k^2}\sum\limits_{i=0}^{k-1}\sum\limits_{\alpha=1}^{N}\sum\limits_{g,l,s,t=1}^{M}\mathcal{O}_{gl}^{(\alpha)}\mathcal{O}_{ts}^{(\alpha)}\tr[(V\ket{i}\bra{i}V^{\dagger}-\frac{\mathbb{I}_d}{d})E_{\alpha,l}]\tr[(V\ket{i}\bra{i}V^{\dagger}-\frac{\mathbb{I}_d}{d})E_{\alpha,s}]\tr(E_{\alpha,g}E_{\alpha,t})  \nonumber\\
    +&\frac{h_{x}^2}{k^2}\sum\limits_{i=0}^{k-1}\sum\limits_{\alpha \ne\beta}\sum\limits_{g,l,s,t=1}^{M}\mathcal{O}_{gl}^{(\alpha)}\mathcal{O}_{ts}^{(\beta)}\tr[(V\ket{i}\bra{i}V^{\dagger}-\frac{\mathbb{I}_d}{d})E_{\alpha,l}]\tr[(V\ket{i}\bra{i}V^{\dagger}-\frac{\mathbb{I}_d}{d})E_{\beta,s}]\tr(E_{\alpha,g}E_{\beta,t})      \nonumber\\
    +&\frac{h_{x}^2}{k^2}\sum\limits_{i\ne j}\sum\limits_{\alpha=1}^{N}\sum\limits_{g,l,s,t=1}^{M}\mathcal{O}_{gl}^{(\alpha)}\mathcal{O}_{st}^{(\alpha)}\tr[V\ket{i}\bra{j}V^{\dagger}E_{\alpha,l}]\tr[V\ket{j}\bra{i}V^{\dagger}E_{\alpha,s}]\tr(E_{\alpha,g}E_{\alpha,t})     \nonumber\\
    +&\frac{h_{x}^2}{k^2}\sum\limits_{i\ne j}\sum\limits_{\alpha\ne\beta}\sum\limits_{g,l,s,t=1}^{M}\mathcal{O}_{gl}^{(\alpha)}\mathcal{O}_{st}^{(\beta)}\tr[V\ket{i}\bra{j}V^{\dagger}E_{\alpha,l}]\tr[V\ket{j}\bra{i}V^{\dagger}E_{\beta,s}]\tr(E_{\alpha,g}E_{\beta,t})       \nonumber\\
    =&\frac{1}{kd}+\frac{h_{x}^2x}{k^2}\sum_{i=0}^{k-1}\sum\limits_{\alpha=1}^{N}\sum\limits_{l=1}^{M}|\tr[(V\ket{i}\bra{i}V^{\dagger}-\frac{\mathbb{I}_d}{d})E_{\alpha,l}]|^2+\frac{h_{x}^2x}{k^2}\sum_{i\ne j}\sum\limits_{\alpha=1}^{N}\sum\limits_{l=1}^{M}\tr(E_{\alpha,l} V\ket{i}\bra{j}V^{\dagger})\tr(E_{\alpha,l} V\ket{j}\bra{i}V^{\dagger}).\nonumber
        \end{flalign}

By using the following relation (Lemma 1 in Ref.\cite{HFWang}),
\begin{equation*}
\sum_{\alpha=1}^{N}\sum_{k=1}^{M}|{\rm tr}(E_{\alpha,k}\sigma)|^{2}=\dfrac{d(M^{2}x-d){\rm tr}(\sigma\sigma^{\dagger})+(d^{3}-M^{2}x)|{\rm tr}(\sigma)|^{2}}{dM(M-1)}
\end{equation*}
for any density operator $\sigma$, we have
    \begin{flalign}
    &\tr[(I\otimes \Lambda)(\ket{\psi_k}\bra{\psi_k})]^2\nonumber\\
    =& \frac{1}{kd}+\frac{h_{x}^2x(M^2x-d)(d-1)}
    {kdM(M-1)}+\frac{h_{x}^2x(k-1)}{k}\frac{M^2x-d}{M(M-1)}\nonumber\\
    =&\frac{M(M-1)+h_{x}^2x(M^2x-d)(d-1)+h_{x}^2x(k-1)d(M^2x-d)}{kdM(M-1)}\nonumber\\
    =&\frac{1}{kd-1}\leq\frac{1}{d-1},\nonumber
    \end{flalign}
which proves that $\Lambda(X)$ is a $k$-positive linear map.
\end{proof}

Based on the Theorem above, we have the following class of witnesses:

\begin{theorem}
The operators $W_k$ below are Schmidt-number witnesses of class $(k+1)$,
\begin{equation}
W_k=\frac{M+Ndh_{x}}{dM}\mathbb{I}_d\otimes \mathbb{I}_d-h_{x}\sum\limits_{\alpha=1}^{N}\sum\limits_{g,l=1}^{M}
\mathcal{O}_{gl}^{(\alpha)}\overline{E_{\alpha,l}}\otimes E_{\alpha,g},\label{w1}
\end{equation}
namely, $\tr(W_k\rho)\geq 0$ for all $\rho\in S_{k}$.
\end{theorem}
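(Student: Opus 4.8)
The plan is to recognize $W_k$ as the Choi matrix (Jamio\l{}kowski image) of the $k$-positive map $\Lambda$ of Theorem~1, and then to use the correspondence between $k$-positive maps and Schmidt-number witnesses of class $(k+1)$ already invoked in \cite{Terhal}. Write $\ket{\Omega}=\sum_{i=0}^{d-1}\ket{ii}$ for the (unnormalized) maximally entangled vector on $\mathcal{H}\otimes\mathcal{H}$ and set $C_{\Lambda}=({\rm id}\otimes\Lambda)(\ket{\Omega}\bra{\Omega})=\sum_{i,j=0}^{d-1}\ket{i}\bra{j}\otimes\Lambda(\ket{i}\bra{j})$. The proof splits into two parts: (a) $C_{\Lambda}=W_k$; (b) the Choi matrix of a $k$-positive map is nonnegative on $S_k$, i.e.\ $\tr(C_{\Lambda}\rho)\geq 0$ for every $\rho\in S_k$.

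For part (a), substitute the explicit form (\ref{1}) of $\Lambda$ to get $\Lambda(\ket{i}\bra{j})=\frac{1}{d}\delta_{ij}\mathbb{I}_d-h_x\sum_{\alpha,g,l}\mathcal{O}_{gl}^{(\alpha)}\bigl(\bra{j}E_{\alpha,l}\ket{i}-\tfrac{1}{d}\tr(E_{\alpha,l})\delta_{ij}\bigr)E_{\alpha,g}$, and sum over $i,j$ using three elementary identities: (i) $\sum_{i,j}\ket{i}\bra{j}\,\bra{j}E_{\alpha,l}\ket{i}=E_{\alpha,l}^{\mathsf{T}}=\overline{E_{\alpha,l}}$ because $E_{\alpha,l}$ is Hermitian; (ii) $\sum_{l=1}^{M}\mathcal{O}_{gl}^{(\alpha)}=1$ for every $g$, since $\mathcal{O}^{(\alpha)}$ fixes $\mathbf{n}_{*}=(1,\cdots,1)^{\mathsf{T}}/\sqrt{d}$; (iii) $\tr(E_{\alpha,l})=d/M$ together with the completeness relation $\sum_{g=1}^{M}E_{\alpha,g}=\mathbb{I}_d$. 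The $\delta_{ij}$-term gives $\frac{1}{d}\mathbb{I}_d\otimes\mathbb{I}_d$; the term carrying $\tfrac{1}{d}\tr(E_{\alpha,l})\delta_{ij}$ collapses through (ii)--(iii) to $\frac{h_x}{M}\sum_{\alpha=1}^{N}\mathbb{I}_d\otimes\mathbb{I}_d=\frac{Nh_x}{M}\mathbb{I}_d\otimes\mathbb{I}_d$; and what is left is exactly $-h_x\sum_{\alpha,g,l}\mathcal{O}_{gl}^{(\alpha)}\overline{E_{\alpha,l}}\otimes E_{\alpha,g}$. Combining the two scalar multiples of $\mathbb{I}_d\otimes\mathbb{I}_d$ into $\frac{1}{d}+\frac{Nh_x}{M}=\frac{M+Ndh_x}{dM}$ reproduces (\ref{w1}), so $C_{\Lambda}=W_k$.

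For part (b), since $({\rm id}\otimes\Lambda)$ preserves Hermiticity one has, for any Hermitian $\rho$, $\tr(W_k\rho)=\tr\bigl(({\rm id}\otimes\Lambda)(\ket{\Omega}\bra{\Omega})\,\rho\bigr)=\bra{\Omega}({\rm id}\otimes\Lambda^{\dagger})(\rho)\ket{\Omega}$. As noted before Theorem~1, $\Lambda^{\dagger}$ is again $k$-positive; hence, by the criterion of \cite{Terhal} recalled above, $\rho\in S_k$ forces $({\rm id}\otimes\Lambda^{\dagger})(\rho)\geq 0$, and sandwiching this positive semidefinite operator between $\ket{\Omega}$ yields $\tr(W_k\rho)\geq 0$. (Equivalently, one may argue directly: by convexity it suffices that $\bra{\psi}W_k\ket{\psi}\geq 0$ for every $\ket{\psi}=(A\otimes\mathbb{I}_d)\ket{\Omega}$ with ${\rm rank}(A)\leq k$, which follows from $k$-positivity of $\Lambda$ applied to the Schmidt-rank-$\leq k$ vector $(A^{\dagger}\otimes\mathbb{I}_d)\ket{\Omega}$.) The remaining witness requirement, $\tr(W_k\sigma)<0$ for some $\sigma$, holds precisely when $W_k\ngeq 0$, i.e.\ when $\Lambda$ is not completely positive; this is the case in the admissible parameter range, and the example given below exhibits such a $\sigma$, of Schmidt number $k+1$.

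The one genuinely computational point is the bookkeeping in part (a): keeping track of the partial transposition that turns $E_{\alpha,l}$ into $\overline{E_{\alpha,l}}$ on the first tensor factor, and contracting the two auxiliary sums by means of $\mathcal{O}^{(\alpha)}\mathbf{n}_{*}=\mathbf{n}_{*}$ and POVM completeness, so that every contribution which is not of the form $\overline{E_{\alpha,l}}\otimes E_{\alpha,g}$ coalesces into the single scalar $\frac{M+Ndh_x}{dM}$ in front of $\mathbb{I}_d\otimes\mathbb{I}_d$. Part (b) is the textbook link between $k$-positive maps, block-positive operators, and Schmidt-number witnesses of class $(k+1)$.
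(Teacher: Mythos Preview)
Your proof is correct and follows precisely the route the paper has in mind: the paper states Theorem~2 ``based on the Theorem above'' without writing out any argument, so the intended proof is exactly the Choi--Jamio\l{}kowski computation you supply in part~(a) together with the standard $k$-positive/$(k+1)$-witness correspondence in part~(b). Your bookkeeping (the identities $\sum_{l}\mathcal{O}_{gl}^{(\alpha)}=1$, $\sum_{g}E_{\alpha,g}=\mathbb{I}_d$, and $E_{\alpha,l}^{\mathsf{T}}=\overline{E_{\alpha,l}}$) is exactly what is needed to collapse the identity contribution to $\tfrac{M+Ndh_x}{dM}$, and your dual argument via $\Lambda^{\dagger}$ is the textbook step the paper alludes to in the paragraph preceding Theorem~1.
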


In particular, when $N-1=M=d$, $x=1$ and $k=1$, (\ref{w1}) reduces to
    \begin{equation*}
    W_k=\frac{2}{d-1}\mathbb{I}_d\otimes \mathbb{I}_d-\frac{1}{d-1}\sum_{\alpha=1}^{N}\sum_{g,l=1}^{M}
    \mathcal{O}_{gl}^{(\alpha)}\overline{E_{\alpha,l}}\otimes E_{\alpha,g},
    \end{equation*}
which gives rise to the entanglement witnesses given in Eq.(15) of Ref.\cite{Chruściński}.

\textcolor{blue}{\textbf{Remark (Connection to the Fedorov ratio).} The witness $W_k$ provides a strict lower bound for the Schmidt number $ SN(\rho)\geq k + 1\ $. If the experimental state is close to a pure Gaussian state, such as those generated in SPDC, the Fedorov ratio $R$ is approximately equal to the Schmidt number $K$ (i.e., $R\approx K$ ). When the measured value of $ R \geq k + 1 $, it can be used to validate the negative expectation value of $W_k$, forming a \textit{dual validation} approach through both theoretical witnesses and experimentally measurable quantities. In more general scenarios, $R$ serves as an upper bound or approximation for $K$, providing an alternative check for the Schmidt number in experimental contexts. \cite{Fedorov} \cite{G. Brida}}.

\begin{example}
Consider the following isotropic state \cite{isostate},
\begin{equation*}
    \rho_v = v\ket{\psi_d}\bra{\psi_d}+\frac{1-v}{d^2}I_{d^2}
\end{equation*}
where $\ket{\psi_d}=\frac{1}{\sqrt{d}}\sum_{i}\ket{ii}$ is the maximally entangled state and $v \in{(0,1]}$.

Denote $\alpha=\frac{M+Ndh_{x}}{dM}$ and $\Psi_d=\ket{\psi_d}\bra{\psi_d}$. From our Schmidt witness (\ref{w1}) we obtain
  \begin{flalign}\label{e6}
       \tr (W_k {\rho_v})=\tr&[(\alpha I_{d^2}-h_x\sum_{\alpha=1}^N\sum_{g,l=1}^M {O}_{gl}^{(\alpha)}\overline{E_{\alpha,l}}\otimes E_{\alpha,g})(v\Psi_d+\frac{1-v}{d^2}I_{d^2})]\nonumber\\
      =&\alpha v+\alpha(1-v)-\frac{h_x(1-v)N}{M}-vh_x\sum_{\alpha=1}^N\sum_{g,l=1}^M \tr[({O}_{gl}^{(\alpha)}\overline{E_{\alpha,l}}\otimes E_{\alpha,g})\Psi_d]\nonumber\\
       =&\alpha v+\alpha(1-v)-\frac{h_x(1-v)N}{M}-vh_x\sum_{\alpha=1}^N\sum_{g,l=1}^M\tr[{O}_{gl}^{(\alpha)}(I_{d}\otimes E_{\alpha,l}^{\dagger}E_{\alpha,g})\Psi_d]                  \nonumber\\
       =&\alpha v+\alpha(1-v)-\frac{h_x(1-v)N}{M}-vh_x\frac{1}{d}\sum_{\alpha=1}^N\sum_{g,l=1}^M\tr({E_{\alpha,g}E_{\alpha,l}})         \nonumber\\
        =&\alpha v+\alpha(1-v)-\frac{h_x(1-v)N}{M}-vh_x\sum_{\alpha=1}^N\sum_{g=1}^M\frac{{O}_{gg}^{(\alpha)}}{d}\tr({E_{\alpha,g}^2})-vh_x\frac{d-Mx}{M(M-1)}\frac{1}{d}\sum_{\alpha=1}^N\sum_{g\neq l}{O}_{gl}^{(\alpha)}        \nonumber\\
         =&\alpha v+\alpha(1-v)-\frac{h_x(1-v)N}{M}-vh_x[(x-\frac{d-Mx}{M(M-1)})\sum_{\alpha=1}^N\sum_{g=1}^M\frac{{O}_{gg}^{(\alpha)}}{d}+\frac{d-Mx}{M(M-1)}\frac{1}{d}\sum_{\alpha=1}^N\sum_{g,l=1}^M{O}_{gl}^{(\alpha)}]            \nonumber\\
         =&\alpha v+\alpha(1-v)-\frac{h_x(1-v)N}{M}-vh_x[(x-\frac{d-Mx}{M(M-1)})
         \sum_{\alpha=1}^N\sum_{g=1}^M\frac{{O}_{gg}^{(\alpha)}}{d}
         +\frac{d-Mx}{M-1}\frac{1}{d}N],
  \end{flalign}
where in the third equality, we have used the following fact, $(A\otimes I)\ket{\psi_d}=(I\otimes A^{\mathsf{T}})\ket{\psi_d}$ for any linear operator A, with $A=\overline{E_{\alpha,l}}$, $I=I_{d}$ and $\ket{\psi_d}=\frac{1}{\sqrt{d}}\sum_{i}\ket{ii}$.

In particular, if we take ${O}_{gg}^{(\alpha)}=1$ ($g=1,\cdots,M$; $\alpha=1,\cdots,N$), we obtain
\begin{equation*}
\tr (W_k {\rho_v})=\alpha-\frac{Nh_x(1-v)}{M}-\frac{xvh_xNM}{d}.
\end{equation*}
Therefore, when
 \begin{equation*}
 v>\frac{M(kd-1){\sqrt{x(M^2x-d)}}}{(M^2Nx-Nd)\sqrt{M(M-1)}},
 \end{equation*}
the Schmidt number of $\rho_v$ is greater than $k$.

Let us take $d=3$, $k=2=M=2$ and $N=8$. Then $x \in(0.75,1.50]$.
Based on Schmidt witness given in Ref.\cite{Shi}, by taking $L=4$ and correcting a calculation error in Ref.\cite{Shi}, we get $SN(\rho_v)>2$ when $v>\frac{\sqrt{(dk-1)(Lk-L+d-1)}}{dL-L}=0.685$.
From our theorem we have that $SN(\rho_v)>2$ when $v>{\frac{5}{\sqrt{2}}}\frac{\sqrt{x(4x-3)}}{16x-12}\equiv v(x)$.
From Fig.1 we see that $SN(\rho_v)>2$ when $v>0.625$ ($x=1.5$).
In fact, when $x>1.286$, our witness already gives rise to the result that
$SN(\rho_v)>2$ when $v>0.685$ obtained by \cite{Shi}.
Hence, our Schmidt witness is more efficient in detecting $SN(\rho_v)$.
\begin{figure}[h]
\centering
\includegraphics[width=0.7\textwidth]{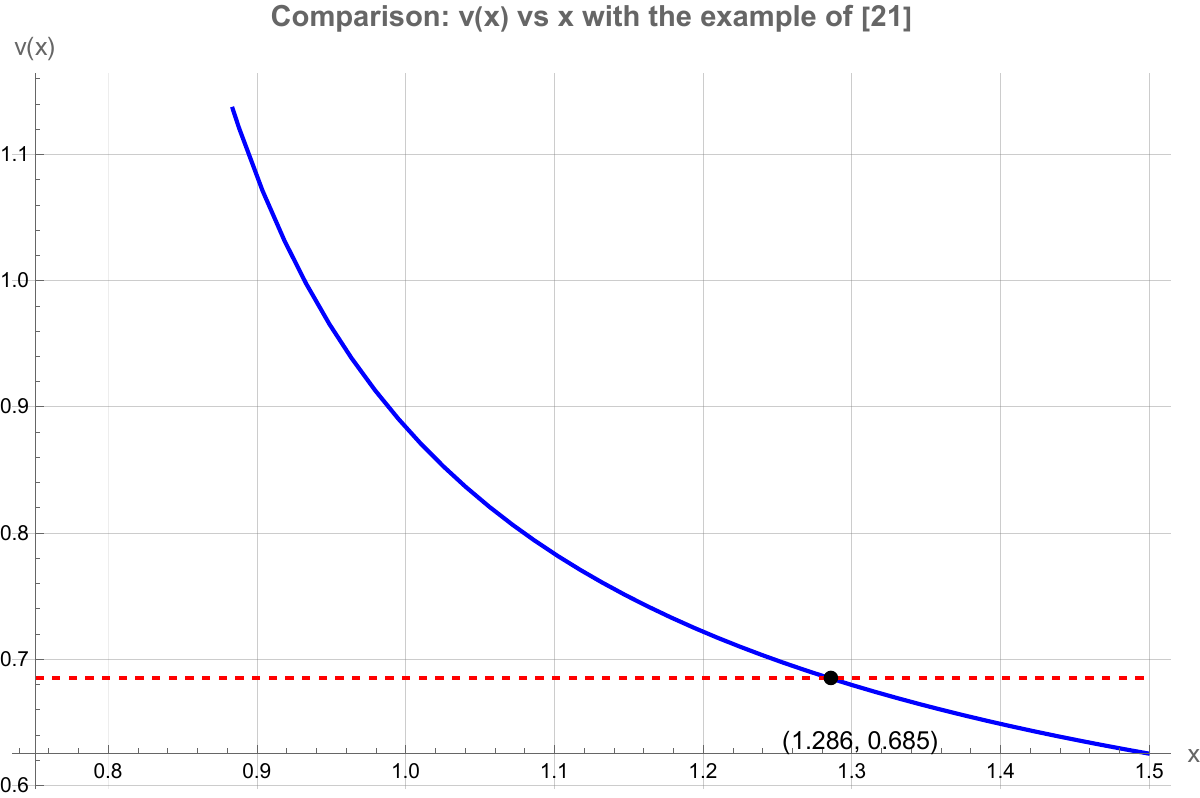}
\caption{comparison}
The red line represents the result $v=0.685$ presented in example 1 in \cite{Shi}. The blue line represents the function $v(x)$ of $x$, with the smallest value $0.625$ of $v(x)$ attained at $x=1.5$.
\label{fig:myfigure}
\end{figure}
\end{example}

\section{Conclusions and Discussions}
We have presented a Schmidt number witness of class $(k+1)$ by constructing $k$-positive linear maps based on $(N,M)$-POVM. By detailed example, we have shown that our Schmidt number witness detects better Schmidt number of quantum states in high-dimensional systems. Our approach may inspire the constructions of new Schmidt witnesses and supply a plausible way to verify experimentally the Schmidt number of given quantum states.

\textcolor{blue}{Based on current frequency-domain SPDC experiments, we suggest that for a given source, the following two steps should be performed simultaneously:  
(i) measure the $(N,M)$-POVM statistics and compute $\text{Tr}(W_k \rho)$;  
(ii) measure the Fedorov ratio $R$, which is related to the Schmidt number $K$. For pure or near-pure states, $R$ and $K$ are closely related. When (i) provides a lower bound $SN(\rho) \geq k+1$ and (ii) shows $R \geq k+1$, consistent experimental and theoretical evidence can be established. }

\bigskip
\noindent{\bf Acknowlegements}
This work is supported by the National Natural Science Foundation of China (NSFC) under Grant No. 12171044, the specific research fund of the Innovation Platform for Academicians of Hainan Province.

\end{document}